\newtheorem{theorem}{Theorem}[section]
\newtheorem{proposition}[theorem]{Proposition}
\theoremstyle{definition}
\newtheorem{definition}[theorem]{Definition}
\newtheorem{remark}[theorem]{Remark}
\theoremstyle{definition}
\newcommand{\bt}{\begin{theorem}}                     
	\newcommand{\et}{\end{theorem}}                       
\newcommand{\bd}{\begin{definition}}                  
	\newcommand{\ed}{\end{definition}}                    
\newcommand{\bl}{\begin{lemma}}                       
	\newcommand{\el}{\end{lemma}}                                   
\newcommand{\bpr}{\begin{proposition}}                  
	\newcommand{\epr}{\end{proposition}}                    
\newcommand{\bere}{\begin{remark}}                      
	\newcommand{\ere}{\end{remark}}                         
\newcommand{\beq}{\begin{equation}}
	\newcommand{\eeq}{\end{equation}}
\def\bal#1\eal{\begin{align}#1\end{align}}              
\def\baln#1\ealn{\begin{align*}#1\end{align*}}          
\def\bml#1\eml{\begin{multline}#1\end{multline}}        
\def\bmln#1\emln{\begin{multline*}#1\end{multline*}}  
\def\bga#1\ega{\begin{gather}#1\end{gather}}
\def\bgan#1\egan{\begin{gather*}#1\end{gather*}}
\newcommand{\de}{\mathrm{d}}                        
\newcommand{\N}{\ensuremath{\mathbb{N}}\xspace}     
\newcommand{\R}{\ensuremath{\mathbb{R}}\xspace}     
\newcommand{\Sp}{\ensuremath{\mathbb{S}}\xspace}     
\newcommand{\eps}{\varepsilon}
\newcommand{\calh}{\ensuremath{\mathcal{H}}\xspace}
\newcommand{\HH}{{\mathcal{H}}}
\def\bb#1\eb{\textcolor{blue}
	{#1}} %
\def\br#1\er{\textcolor{red}
	{#1}} %
\def\bv#1\ev{\textcolor{green}
	{#1}} %
\def\bm#1\em{\textcolor{magenta}
	{#1}}
\begin{document}
	
	\title{Lorentzian-Euclidean black holes and Lorentzian to Riemannian metric transitions}
	
	\author{Rossella Bartolo}
	\email{rossella.bartolo@poliba.it}
	\affiliation{Dipartimento di Meccanica, Matematica e Management, Politecnico di Bari, Bari, Italy}
	
	\author{Erasmo Caponio}
	\email{erasmo.caponio@poliba.it}
	\affiliation{Dipartimento di Meccanica, Matematica e Management, Politecnico di Bari, Bari, Italy}
	
	\author{Anna Valeria Germinario}
	\email{anna.germinario@uniba.it}
	\affiliation{Dipartimento di Matematica, Università degli Studi di Bari Aldo Moro, Bari, Italy}
	
	\author{Miguel Sánchez}
	\email{sanchezm@ugr.es}
	\affiliation{Departamento de Geometría y Topología, Facultad de Ciencias, \& IMAG (Centro de Excelencia María de Maeztu) Universidad de Granada, Granada, Spain}
	
	\begin{abstract}
		In recent papers on spacetimes with a signature-changing metric,  Capozziello et al. \cite{CaDeBa24} and  Hasse and Rieger \cite{HasRie24b} introduced, respectively, the concept of a Lorentzian-Euclidean black hole and new elements   for Lorentzian-Riemannian signature change. In both cases the transition in the signature happens on a hypersurface $\mathcal{H}$. The former is a signature-changing modification of the Schwarzschild spacetime satisfying the vacuum Einstein equations in a weak sense. Here $\mathcal{H}$ is the event horizon which serves as a boundary beyond which time becomes imaginary. We clarify an issue appearing in \cite{CaDeBa24} based on numerical computations which suggested that an observer in radial free fall would require an infinite amount of proper time to reach the event horizon. We demonstrate that the proper time needed to reach the horizon remains finite, consistently with  the classical Schwarzschild solution,  and suggesting that the model in \cite{CaDeBa24} should be  revised.   About the latter, we stress that $\mathcal{H}$ is naturally a spacelike hypersurface   related to  the future or past causal boundary of the Lorentzian  sector.
		Moreover, a number of geometric interpretations appear, as the degeneracy
		of the metric $g$ corresponds to the collapse of the causal cones into a line, the
		degeneracy of the dual metric $g^*$ corresponds to collapsing into a
		hyperplane, and additional geometric structures  on $\mathcal{H}$ (Galilean
		and dual Galilean) might be explored.
		\end{abstract}
	
	\maketitle
	
	\section{Introduction}
	\label{sec:intro}
	In quantum theories of gravity, various approaches consider different treatments of the metric signature \cite{Haywar92a, GibHar90}. In particular, this is exemplified in quantum cosmology by the Hartle-Hawking no-boundary proposal \cite{HarHaw83}, which incorporates a transition from a Riemannian to a Lorentzian metric. Such transitions in  metric signature have subsequently  been studied also in classical spacetimes from the perspective of the junction conditions that they must satisfy (see, e.g., \cite{CarEll95} and the references therein).	Moreover, the local  and the global geometry of a manifold endowed with a signature-changing $(0,2)$ bilinear tensor field have been studied respectively in \cite{KosKri93} and in \cite{KosKri97}.
	
	Recent works by Capozziello et al. \cite{CaDeBa24} and Hasse and Rieger \cite{HasRie24b} have explored different aspects of signature changes. While both papers deal with signature transitions, they approach the subject from distinct perspectives and with different physical interpretations. Capozziello et al. introduce a novel type of black hole solution where the signature changes across the event horizon, while	
	Hasse and Rieger develop a geometric framework for Lorentzian-Riemannian transitions. 
	 In this paper, we aim to clarify some key aspects of both approaches. Particularly, we focus on the nature of the transition hypersurface \calh in both frameworks. 
	 Indeed, in  \cite{CaDeBa24}, the event horizon has a lightlike nature and the issue is whether some freely falling observers arrive at \calh in infinite proper time, differently from the classical Schwarzschild spacetime. We  prove that this does not happen and neither occurs if the jump in the coordinates between the outer and inner parts is smoothened by using a sign changing function. Indeed, this is checked first in  Schwarzschild coordinates and, then,  in  Gullstrand-Painlevé coordinates of the  Lorentzian-Euclidean black hole.  The latter proof appears to conflict with numerical data  pointed out in \cite{CaDeBa24}   and demands a full revision of the model by their proponents.  
	 
Regarding \cite{HasRie24b}, it is  rooted in the Hartle-Hawking no-boundary proposal, which suggests  that a Riemannian component  eliminates the need for initial boundary conditions at the Big Bang. While observers would not be expected to travel into the (essentially quantum) Riemannian region, the transition hypersurface can nevertheless be analyzed from the spacetime perspective as the causal boundary of spacetime. In the last section, we introduce several concepts for this analysis.

	\section{Lorentzian-Euclidean black holes and finiteness of arrival proper time}
	\label{sec:blackholes}
	
	\subsection{The signature-changing metric}
	In   \cite{CaDeBa24},  the following signature-changing metric on $M:=\R\times (0,\infty)\times \Sp^2$ in spherical  coordinates $(t, r, \theta, \varphi)$ is introduced: 
	\beq\label{les}
	g = 
	- \varepsilon (r)\left( 1 -\frac{2m}{r} \right) \de t^2+\left(1 -\dfrac{2m}{r}\right)^{-1}\de r^2+
	r^2\de\Omega^2 ,
	\eeq
	where $d\Omega^2 = \sin^2\theta d\varphi^2 + d\theta^2$ is the standard metric
	of the unit $2$--sphere $\Sp^2$ in $\R^3$, $m>0$ is a positive parameter
	and
	\beq\label{sign}
	\varepsilon (r) = {\rm sign}\left(1 -\frac{2m}{r}\right)\!\!,\text{ for $r>0$, $r\neq 2m$,}\quad \eps(2m)=0.
	\eeq
	The metric signature switches from the usual Lorentzian one on the region  $V^+:=\R\times (2m, \infty)\times \Sp^2$ to a semi-Riemannian one of index $2$ on  $V^-:=\R\times (0,2m)\times \Sp^2$, upon crossing the  hypersurface  $\calh = \{  r=2m\}$.   In particular,  
	for fixed $\theta$ and $\varphi$ the metric  on $\R\times (0,2m)$ is Riemannian, up to a negative sign, and for this reason in analogy with the so-called  Euclidean-Schwarzschild metric (see \cite{Allen84}), the authors    call it a  {\em Lorentzian-Euclidean Schwarzschild metric} (hereafter {\em LES metric}).  The line element \eqref{les}, away from \calh, corresponds to the Schwarzschild metric with an imaginary time substitution in the region $V^-$: the coordinate time $t$ is replaced by $i t$ when the event horizon \calh is crossed. 
	
	The LES metric  \eqref{les} is both divergent and degenerate on $\calh$. However, the use of Gullstrand-Painlevé coordinates eliminates the divergence, leaving only the degeneracy to be addressed. In these coordinates $(\mathscr T, r, \theta, \varphi)$,  \eqref{les} becomes:
	\beq\label{gp}
	g=- \varepsilon(r) \de \mathscr T^2 + \left(\de r +\sqrt{\varepsilon(r)}\sqrt{\frac{2m}{r}} \de \mathscr T\right)^2+
	r^2\de\Omega^2.
	\eeq
	In \cite{CaDeBa24} the authors consider a smooth approximation of the function $\eps$, replacing \eqref{sign} by
	\beq\label{app}
	\varepsilon_{\rho,\kappa}(r)=\dfrac{(r-2m)^{\frac{1}{2\kappa + 1}}}{[(r-2m)^2 + \rho]^\frac{1}{2(2\kappa + 1)}},
	\eeq
	with $\rho>0$ and $\kappa\in\N$. Notably, $\varepsilon_{\rho,\kappa}\to \eps$ in $C^\infty$ norm in any set of the type $(0, 2m-a]\cup [2m+a, \infty)$, $a>0$, both as $\rho \to 0$ and $\kappa\to \infty$.  After some elementary algebraic manipulations, \eqref{gp} can be written as 
	\beq\label{gps}
	g_{\varepsilon_{\rho,\kappa}}:= - \varepsilon_{\rho,\kappa}(r)\left(1-\frac{2m}{r}\right) \de \mathscr T^2+\de r^2 + 2\sqrt{\varepsilon_{\rho, \kappa}(r)}\sqrt{\frac{2m}{r}} \de \mathscr T \de r + r^2\de \Omega^2.
	\eeq
	This metric is continuous everywhere  but, since $\varepsilon_{\rho,\kappa}(2m)=0$, is still not differentiable at \calh,  while it is   smooth both on $V^+$ and $V^-$ (where it is complex).  The metrics \eqref{gps} are used  in  a  Hadamard regularization type argument (see \cite[\S 9.6.2]{PoiWil14}), leading  the authors of  \cite{CaDeBa24} to conclude  that \eqref{gp} is a  well-defined signature-changing solution of the vacuum Einstein field equations that they call a {\em Lorentzian-Euclidean black hole}.
	
	\subsection{On  the geodesics of the signature-changing metric} \label{2b}
      Let $\tilde \varepsilon: [0,\infty)\rightarrow (-1,1)$ be  a continuous function which is smooth on $[0,\infty)\setminus\{2m\}$,  such that $\tilde \eps'(r)>0$, $\tilde \varepsilon(0)<0$, $\tilde \eps(r)\to 1$ as $r\to \infty$ and $\tilde \varepsilon(2m)=0$, $\tilde \varepsilon'(2m)=+\infty$ (so $\tilde{\eps}$ behaves as any $\varepsilon_{\rho,\kappa}$ in \eqref{app}).  Let $g_{\tilde\varepsilon}$ be defined as in \eqref{les} with $\tilde\varepsilon$ replacing $\eps$. For the sake of completeness and  just  to reproduce formulas appearing in \cite{CaDeBa24}, we remark that,     in Eqs. \eqref{E}--\eqref{propertime} below, $\tilde \eps$  can be taken equal to   the step function $\eps$ in \eqref{sign}.  (In any case,  the reader can check that  if $\tilde \eps'\geq 0$ then in Proposition~\ref{concave} we obtain that $r$ is concave and Proposition~\ref{fpt} still holds.)
	 
	 In \cite{CaDeBa24}, following \cite{ElSuCH92, Ellis92}\footnote{In \cite{Ellis92}, the changing-signature metric takes the form of a Robertson-Walker metric with a lapse function given by the step function $\varepsilon$. While that work also considers a continuous representation obtained through a coordinate change with a sign-changing function (such as $\tilde{\varepsilon}$), its setting fundamentally differs from \cite{CaDeBa24} in that there is no divergent singularity at the signature-changing hypersurface.}, a family of   continuous   privileged curves $\gamma:I\to M$, $\gamma=\gamma(\tau)=\big(t(\tau), r(\tau), \theta(\tau), \varphi(\tau)\big)$,    is introduced: each curve $\gamma$ is   smooth off the instants where it crosses \calh,  it satisfies  
	\beq   g_{\tilde\eps} (\dot\gamma, \dot\gamma)=-\eps(r)\label{privileged}\eeq
	(see \cite[Eq. (6)]{CaDeBa24}) and  is a critical  curve of the Lagrangian  
	\[-\eps(r)g_{\tilde\eps},  \]
	  at least separately on $V^+$ and $V^-$.
	  Since this Lagrangian is independent of $t$,  there exists a    piecewise   constant function   $E$ along $\gamma$ such that (see \cite[Eq. (8)]{CaDeBa24})
	\beq\label{E}E  =   \tilde \eps(r) \eps(r)  \Lambda(r)\dot{t},\eeq
	  where  	$\Lambda(r):= \left( 1 -\frac{2m}{r} \right)$. 
	In \cite{CaDeBa24}, the curves $\gamma$ are timelike in the Lorentzian sector   where then  $E\neq 0$. 
	Moreover,  from \eqref{privileged}, assuming that   $\gamma(\tau)=(t(\tau),r(\tau))$ is radial, we get:
	\[-\eps(r) =  -  \tilde \eps(r)  \Lambda (r)\dot{t}^2 + \frac{\dot{r}^2}{\Lambda(r)}.\]
	Hence, the radial component $r=r(\tau)$ satisfies (see \cite[Eq.(39)]{CaDeBa24})  		
	\beq\label{dottr}\dot r^2=\frac{E^2}{  \tilde \eps(r) \eps^2(r) }-\eps(r)\Lambda (r).
	\eeq
	Thus,    taking into account that $\tilde \eps\eps^2=\tilde \eps$     the proper time that a radial infalling  observer   takes to reach \calh is equal to
	\beq\label{propertime}
	\tau = \int_{2m}^{r_0} \frac{\sqrt{  \tilde \eps(r) }}{\sqrt{E^2-  \tilde \eps(r)\eps(r)  \Lambda(r)}}\de r,\quad r_0>2m,
	\eeq
  which is clearly finite, since $E\neq 0$   on $V^+$  as recalled above.  
	 We guess that the authors in \cite{CaDeBa24}   view   the singularity   at \calh    of the signature-changing metric  to be problematic for analyzing the proper time of infalling observers  as discussed above.    However,  \eqref{propertime} can be also derived in Gullstrand-Painlev\'e coordinates   (see Appendix \ref{appA}) where the signature-changing metric has no diverging singularity as recalled above.\footnote{Although proper time is preserved under coordinate transformations, the 
	 	smoothing procedure in Gullstrand-Painlevé coordinates introduce a further complication due to the complex valued function $\sqrt{\tilde \eps}$ which does not appear in
	 	 Schwarzschild coordinates.
	 	Anyway, the computations in Appendix~\ref{appA} confirm that this does not affect
	 	the finiteness of proper time for radial infalling  observers.} 
	    We  must note   that while  \eqref{propertime}  is compatible with  the continuity   of  the parameter $\tau$  at \calh, it yields  imaginary values for $r_0<2m$.     Indeed, while in \cite{CaDeBa24}   $\tau$ is used   to adapt Gullstrand-Painlevé coordinates to the manifold $M$ with the signature-changing metric \eqref{les}, the analysis of radial infalling observers   relies  instead on    Eq. (46) of \cite{CaDeBa24},   leading the authors to conclude   that infinite proper time is required  to reach the event horizon.   They support this conclusion through     two approaches: first, by approximating $\eps_{\rho,\kappa}$ in \eqref{app} with a constant $\eps$ and taking the limit as   $\eps \to 0$,   and second, by numerically solving \cite[Eq. (46)]{CaDeBa24} for  the proper time of a radial geodesic within the spacetime region $(V^+, g_{\eps_{\rho, \kappa}})$,   using an  unspecified  $\eps_{\rho, \kappa}$ close to $\eps$ (see \cite[\S IV.A]{CaDeBa24}).

	   Following then \cite{CaDeBa24}'s framework, in the next section we  analyze  timelike  geodesics on $V^+$    using   the family of approximating metrics \eqref{gps} in Gullstrand-Painlevé coordinates.   We demonstrate   that the proper time of an infalling observer remains finite    because the radial coordinate of an infalling geodesic is strictly  concave  and therefore cannot have the asymptote $r=2m$ while staying  above   the horizon. 
	 
	\subsection{Finiteness of the proper arrival time at the horizon in the Gullstrand-Painlev\'e  coordinates}
	  Let now $g_{\tilde\varepsilon}$ be  as in \eqref{gps} with $\tilde\varepsilon$, defined as in previous subsection,  replacing $\varepsilon_{\rho,\kappa}$.    	In what follows we set
	\begin{equation} \label{bgp}
		\beta(r)=\tilde\varepsilon(r)\left(1-\frac{2m}{r}\right).
	\end{equation}  
	\bpr\label{concave}
	Let $\gamma (s) =  \big(\mathscr T(s), r(s), \theta(s), \varphi(s)\big)$, $s\in I$,  be a causal 
	geodesic with respect to the metric $g_{\tilde\varepsilon}$, such that 
	$r(s) \in ( 2m, 3m) $, for all $s\in I$. Then function  $r$ is strictly concave.
	\epr
	\begin{proof} By   \eqref{rtt}-\eqref{rphph},  the radial component $r$ of a  geodesic $\gamma$  for $g_{\tilde\eps}$  
		verifies in $I$
		\begin{align}
			\ddot r & = -\Gamma^r_{\mathscr T\!\!\mathscr T}\dot{\mathscr T}^2  - \Gamma^r_{rr}\dot r^2  -2\Gamma^r_{\mathscr T\!r}\dot r \dot{\mathscr T} 	-\Gamma^r_{\theta\theta}\dot\theta^2 - \Gamma^r_{\varphi\varphi}\dot\varphi^2   \nonumber \\
			&  =  -\frac{m}{r^2}\beta \dot{\mathscr T}^2-\frac 12\beta^2 \frac{\tilde \varepsilon'}{\tilde \varepsilon^2}\dot{\mathscr T}^2     + \frac{m}{r^2}\dot r^2 - \frac{m}{r}\frac{\tilde \varepsilon'}{\tilde \varepsilon}\dot r^2     
			+ \frac{2m}{r^2} \sqrt{\frac{2m}{r}}\sqrt{\tilde\varepsilon}\dot{\mathscr T}\dot r +  \beta  \frac{\tilde \varepsilon'}{\tilde \varepsilon^2} \sqrt{\frac{2m}{r}}\sqrt{\tilde\varepsilon}\dot{\mathscr T}\dot r  
			+ (r-2m)\Omega,  \label{ddot}
		\end{align} 
		where $\Omega  = \sin^2\theta \dot \varphi^2 + \dot \theta^2 $. 
		Since  $\gamma$ is  causal then
		\beq\label{en}
		2\sqrt{\tilde \varepsilon}\sqrt{\frac{2m}{r}}\dot{\mathscr T}\dot r\leq \beta\dot t^2 - \dot r^2 - r^2\Omega.
		\eeq
		By \eqref{ddot} we get
		\beq \label{en1}
		\ddot{r}	\leq - \frac{m}{r}\frac{\tilde \varepsilon'}{\tilde \varepsilon}\dot r^2 - \frac{\beta}{2}\frac{\tilde \varepsilon'}{\tilde\varepsilon^2}\dot r^2 -\frac{\beta}{2}\frac{\tilde\varepsilon'}{\tilde\varepsilon^2}r^2\Omega + (r-3m)\Omega .
		\eeq
 Hence,  it follows that  $\ddot r (s) \leq 0$  for any $s \in I$. Taking into account  that $\tilde\eps'>0$,  if $\dot r(s)  \not=0$ or $\Omega(s)\neq 0$,  then $\ddot r (s) < 0$. If   $\Omega (s)  =0$ and $\dot r(s)  = 0$,   then $\dot t (s) \not =0$ (recall that $\gamma$  is causal). So by \eqref{ddot} 
$\ddot r (s) < 0$ as well.

	\end{proof}

	Since $r=r(s)$ is strictly concave,  any causal geodesic arc in the region 
	\[S:=\R\times (2m, 3m)\times\Sp^2,\]
	cannot have a minimum in the interior of its  interval of parametrization. (In particular, no causal spatially closed orbit exists in that region.)   This implies that,  if  $r^* \in (2m,3m)$, the boundary of the region $V^+_{r^*}:= \R\times (r^*, \infty)\times \Sp^2$ is timelike and lightlike   convex in the sense of 
	\cite[Ch. 4]{Masiel94} and
	\cite{BaGeSa02, Caponi13}. 
	
	\bere\label{timeconvex}\em
	We  observe that the above proof works also for the radial component of a timelike  geodesic in the region $S$ endowed with  the Schwarzschild metric in Gullstrand-Painlev\'e coordinates (i.e., for $\tilde\eps\equiv 1$) by using the strict inequality in \eqref{en}. 
	\ere

	A critical issue regarding the Lorentzian-Euclidean black hole is whether observers can reach the horizon in finite proper time. We now demonstrate that  the proper time is finite, as follows naturally from the concavity of the radial component of any causal geodesic.

	Affine transformations of the parameter $s$ do not alter the strict concavity of the radial component  $r=r(s)$   of any geodesic arc  $\gamma: [a,b)\rightarrow S$ in  the metric \eqref{gps}.  Consequently, if $s$ represents the proper  time and $b=\infty$, $r(s)$ would be a strictly concave function with  the asymptote $r=2m$ as $b\to \infty$ and  $r(s)>2m$ for all $s$. However,  this is impossible; therefore $b<\infty$.  This  contradicts what is claimed in \cite{CaDeBa24}, in particular in Fig. 3.  
	
	However, let us examine  the approximation scheme to establish whether $b_{\tilde \eps}\to \infty$ as $\tilde\eps$ tends to $\eps$.  
	Let $(\tilde\varepsilon_n)_n$ be a sequence of functions as the ones introduced above  \eqref{bgp}, and  $\eps$ as in \eqref{sign}. We assume  that:
	\begin{itemize}
		\item[(a)] $\tilde\eps_n(r)\leq \tilde\eps_{n+1}(r)$ for each $n\in \N$ and  $r\in [2m, \infty)$;
		\item[(b)] $\tilde\eps_n\to \eps$ in $C^2((0, 2m-a]\cup [2m+a, \infty), \R)$,  $a>0$;
	\end{itemize}
	for example,  $\tilde\varepsilon_n$ can be taken as in \eqref{app} for a suitable choice of $\rho$ and $\kappa$. Let us  define $\beta_n:=\tilde\varepsilon_n\left(1-\frac{2m}{r}\right).$
	 \begin{proposition}\label{fpt}
	Let $\gamma\colon [a,b)\to S$, $\gamma=\gamma(s)$,	$b\in (a, \infty]$, be a timelike  geodesic of the metric $g$ in \eqref{gp}, parametrized w.r.t. proper time and  such that $\lim_{s\to b}\gamma(s)\in \calh$. Then $b<\infty$.
	\end{proposition}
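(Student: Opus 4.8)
The plan is to reduce the statement to the strict concavity of the radial coordinate already isolated in Remark~\ref{timeconvex} and then to an elementary fact about concave functions on a half-line. The key starting observation is that the region $S=\R\times(2m,3m)\times\Sp^2$ lies entirely inside $V^+$, where $\eps\equiv 1$; hence the metric $g$ of \eqref{gp} restricted to $S$ is \emph{exactly} the Schwarzschild metric in Gullstrand--Painlev\'e coordinates, and it is smooth there. Thus the given $\gamma$ is a timelike geodesic of the Schwarzschild metric contained in $S$, and Remark~\ref{timeconvex}---equivalently, the computation in \eqref{ddot}--\eqref{en1} carried out with $\tilde\eps\equiv 1$ and the strict version of \eqref{en}---shows that $r=r(s)$ is strictly concave on $[a,b)$. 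Moreover $2m<r(s)<3m$ for every $s\in[a,b)$, so $r$ is bounded, and $r(s)\to 2m$ as $s\to b$ because $\lim_{s\to b}\gamma(s)\in\calh=\{r=2m\}$.

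Next I would argue by contradiction, assuming $b=\infty$. Strict concavity means $\dot r$ is strictly decreasing on $[a,\infty)$, which leaves two possibilities. If $\dot r(s_0)<0$ for some $s_0\in[a,\infty)$, then $\dot r(s)<\dot r(s_0)<0$ for all $s>s_0$, whence $r(s)\le r(s_0)+\dot r(s_0)(s-s_0)\to-\infty$, contradicting $r(s)>2m$. Otherwise $\dot r\ge 0$ on all of $[a,\infty)$, so $r$ is non-decreasing and $r(s)\ge r(a)>2m$ for every $s$, contradicting $r(s)\to 2m$. In either case we reach a contradiction, so $b<\infty$; that is, the radial infalling observer attains the horizon in finite proper time, against the numerical picture in Fig.~3 of \cite{CaDeBa24}.

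I do not expect a genuine obstacle here: once one notices that on $S$ the signature-changing metric \eqref{gp} coincides with the ordinary Schwarzschild metric, the conclusion follows from the strict concavity established earlier together with the boundedness $2m<r<3m$. The one point worth stressing is that this direct route deliberately bypasses the approximating metrics $g_{\tilde\eps_n}$ of \eqref{gps}: pushing the argument through the sequence $(\tilde\eps_n)_n$ would require uniform control of $\tilde\eps_n\to\eps$ up to $r=2m$, which fails since $\tilde\eps_n(2m)=0$ while $\eps\equiv 1$ on $V^+$. If instead one insists on working with the smoothed metrics, the remedy is to use that $\tilde\eps_n\to\eps$ in $C^2$ only on $(0,2m-a]\cup[2m+a,\infty)$ (assumption (b)), so that the corresponding geodesics converge to $\gamma$ on every compact subset of $[a,b)$, each with strictly concave radial component by Proposition~\ref{concave}; passing to the limit gives $r$ concave (not necessarily strictly) on $[a,b)$, which is already enough for the dichotomy above.
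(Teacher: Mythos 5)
Your proof is correct, and it takes a genuinely more direct route than the paper's. The paper does not argue on $S$ with the metric $g$ alone: following the framework of \cite{CaDeBa24}, it \emph{assumes} that the given geodesic $\gamma$ is the limit, in the $C^2$-topology on compact parameter intervals, of timelike geodesics $\gamma_n$ of the smoothed metrics $g_{\tilde\eps_n}$ of \eqref{gps} with converging initial data; it then applies Proposition~\ref{concave} to each $\gamma_n$, passes to the limit to obtain (non-strict) concavity of $r$ on $[a,b)$, excludes constancy via Remark~\ref{timeconvex}, and concludes with the same elementary fact you use, that a concave function lying above a horizontal line cannot converge to it. Your argument short-circuits this: since $S\subset V^+$ where $\eps\equiv 1$, the metric \eqref{gp} restricted to $S$ is exactly the smooth Schwarzschild metric in Gullstrand--Painlev\'e coordinates, so Remark~\ref{timeconvex} applies directly to $\gamma$ and gives strict concavity of $r$ without any approximation, after which your dichotomy ($\dot r<0$ somewhere forces $r\to-\infty$; $\dot r\ge 0$ everywhere forces $r\ge r(a)>2m$, against $r\to 2m$) finishes the proof. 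What your route buys is self-containedness: it avoids the paper's unproven ``it is reasonable to assume'' step about the existence and convergence of the approximating geodesics. What the paper's route buys is a rebuttal of \cite{CaDeBa24} on their own terms: it shows that finiteness of the arrival time survives precisely within the smoothing scheme ($g_{\tilde\eps_n}\to g$) that underlies their numerical evidence, which is the point being contested. Your closing remark on how the approximation argument should be repaired (convergence only on compact subsets away from $r=2m$, concavity passing to the limit) in fact reproduces the essence of the paper's actual proof, so the two arguments are complementary rather than in conflict; the only imprecision there is immaterial, since the relevant failure of uniform convergence near $r=2m$ comes from $\tilde\eps_n(2m)=0$ versus $\eps\to 1$ as $r\to 2m^+$, while $\eps(2m)=0$ as well by \eqref{sign}.
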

	\begin{proof}
	 Following the approach in \cite{CaDeBa24}, which suggests that geodesics of $g$ are approximated by geodesics of
	$g_{\tilde \varepsilon_n}$
	at least until they lay in the region  $V^+$, it is reasonable to assume that there exists a sequence    $\gamma_n: [a, b_n)\to S$,  $\gamma_n(s)=(\mathscr T_n(s), r_n(s), \theta_n(s), \varphi_n(s)\big)$ of timelike geodesics of \eqref{gps},  with $\tilde\eps_n$ replacing $\eps_{\rho,\kappa}$,   parametrized with proper time, such that  $\gamma_n(a)=\gamma(a)=p_0:=(\mathscr T_0,r_0,\theta_0,\varphi_0)\in S$,
	and 	the sequence of initial 
	vectors $v_n$ in $T_{p_0} S$ converges to  $v=\dot\gamma(a)$. 
	From Prop. \ref{concave},   we can assume that $r_n$ is strictly decreasing on $[0,b_n]$, and $r(b_n)=2m$ for all $n\in \N$. (Furthermore, as $r_n$ is strictly concave, then  $\dot r_n(s)\not =-\infty$ for all $s\in [0,b_n)$.) 
	Let us also assume that  $b_n\rightarrow b\in (0,\infty]$ as $n\rightarrow  \infty$. So,  $r_n$ is definitively well-defined on $[0,c]$ for any $c\in (0,b)$  
	and, up to a subsequence, still denoted by 
	$r_n$, we can also assume that $r_n(c)\rightarrow \bar r \in (2m,r_0]$ as 
	$n\rightarrow  \infty$. 	Since $\tilde \varepsilon_n\rightarrow 1$ in $C^2([\bar r,3m],\R)$, it follows that 
	$g_{\tilde \varepsilon_n}\rightarrow g$  on  $V^+_{\bar r}$ in the $C^2$-topology,
	where $g$ is the LES metric \eqref{gp}. 
Consequently,  the sequence  $\gamma_n:[0,c]\to S$ converges in the $C^2$-topology  to  $\gamma|_{[0,c]}$.
In particular, as  (pointwise) limit of concave functions  $r:[0,c]\rightarrow [\bar r, r_0]$  is  concave. Since $c\in (0,b)$ is arbitrary, $r$ is  concave on $[0,b)$. Moreover, $r$ cannot be constant, as on the interval $[0,c]$ it is  the radial component of a timelike geodesic of the Schwarzschild metric and is therefore strictly concave by Remark~\ref{timeconvex}.
	In any case a concave function cannot  admit a horizontal asymptote if its graph  lays above the asymptote.  This demonstrates that $b<\infty$, implying that a freely falling observer in the LES metric requires a finite proper time to reach the event horizon.
	\end{proof}

	\section{Remarks on the Lorentzian-Riemannian signature change}
	
	Hasse and Rieger \cite{HasRie24b} working upon Kossowski and Kriele \cite{KosKri93, KosKri97} consider a 
	transition Lorentz-Riemann in a subset  $\HH$ of a singular semi-Riemannian manifold $M$ where the metric $g$ degenerates, but its differential  does not vanish  (thus,  $\HH$ becomes a smooth hypersurface) and the radical of $g$ is  1-dimensional and transverse to $\HH$. In this case, the metric can be written, at least locally around $\HH$, as 
	\begin{equation}\label{e1}
		g=-tdt^2+g_{ij}(t,x^1, \dots, x^{n-1})dx^idx^j, \qquad \HH=\{t=0\}.
	\end{equation}  
	
	\begin{enumerate}
		\item[$(1)$]
	The part $M_L$ with Lorentzian signature  lies in the region $t>0$ and, regarding $M_L$ as a strongly causal spacetime (with the time orientation provided by $\partial_t$), one can consider its causal 
	boundary\footnote{
		Notice also that the   change of variable  $t\, \in (0, \infty)\mapsto T$ with  $dT= \sqrt{t}dt$   yields  a simple  metric of the type $-dT^2+g_T$  where $g_T$ varies smoothly with $T$ for $T\neq 0$ and is continuous for  $T=0$.  In particular,  $T$ can be used on $M_L$ and, then,  techniques for the causal boundary as in \cite[Sections 3 and 5.3]{Sanche22} apply.} $\partial_cM_L$ (see   \cite{FlHeSa11}). Each integral curve of 
	$-\partial_t$ is then a past-directed timelike curve $\gamma$ and its chronological future $I^+(\gamma)$ is a  terminal indecomposable future set  TIF (recall that the dual notion of TIP applies to future-directed timelike curves which are inextendible in $M_L$). Then, $I^+(\gamma)$ can be identified with the point of $\HH$ which is limit of $\gamma$. It is straightforward to check that each {\em $\gamma$ is a timelike pregeodesic of $M_L$ which arrives in a finite proper time at $\HH$.}
	
	\item[$(2)$]
	In general, $\partial_cM_L=\partial_c ^-M_L\cup \partial_c ^+ M_L \cup 
	\partial_c ^{\hbox{\tiny{naked}}}M_L$, where  
	$\partial_c ^-M_L$ (resp. $\partial_c ^+M_L$) is the past (resp. future) causal boundary  containing all the TIFs (resp. TIPs), and $\partial_c ^{\hbox{\tiny{naked}}}M_L$ contains all the naked singularities (see \cite{Sanche22})
	As for classic relativistic spacetimes, it is natural to assume that $M_L$ is globally hyperbolic\footnote{With more generality, our arguments are trivially extendible to the case of spacetimes conformal to globally hyperbolic spacetimes with timelike boundary (which include, for example,  asymptotically AdS ones) by taking into account the results in \cite{AkFlSa21}.} which turns out to be equivalent to $\partial_c ^{\hbox{\tiny{naked}}}M_L=\emptyset$ \cite[Thm. 3.29]{FlHeSa11}. Summing up, {\em from a global cosmological viewpoint, it is natural to assume that  $M_L$ is globally hyperbolic, and the signature change  occurs in one or two hypersurfaces $\HH^-$ and $\HH^+$, each one contained in $\partial_c ^-M_L$ and  $\partial_c ^+M_L$, respectively.}
	
	\item[$(3)$]
	The counterpart of global hyperbolicity for  Riemannian metrics is metric completeness. In our setting,  the behavior of $g$ in (\ref{e1}) on the Riemannian part $M_R$  (i.e., $t<0$) 
	suggests  that the usual Cauchy boundary   $\partial_{\hbox{\tiny{Cauchy}}} M_R$ for the metric completion should be identified with $\HH$. 
	That is, {\em in a globally hyperbolic framework, the transitions Riemann-Lorentz should occur at hypersurfaces $\HH$ which provide both the Cauchy boundary $\partial_{\hbox{\tiny{Cauchy}}} M_R$  for the Riemannian part $M_R$ and  $\partial_c ^-M_L$ or  $\partial_c ^+M_L$  for $M_L$ (or a connected part of $M_L$)}. Following natural intuitions as in \cite{HarHaw83}, in principle one considers that $M_L$ is connected and there is a single transition (with $\HH$ equal to either $\partial_c ^-M_L$ or $\partial_c ^+M_L$) or two transitions $\HH^-, \HH^+$,  with either one or two connected parts for $M_R$. However, proposals as Penrose's {\em Cosmological Cyclic Cosmology} \cite{Wikipe25} might suggest even an infinite sequence of transitions with infinitely many connected components for $M_L$. 
	
	\item[$(4)$]
	The signature change in (\ref{e1}) corresponds to the degeneracy of the causal cone into a line (the radical of $g$ on $\HH$). However, one can consider a similar degeneration of the dual metric $g^*$ on the cotangent bundle $T^*M$. From the tangent bundle viewpoint, such a signature change  corresponds to the degeneracy of the  cone of $g$ into a hyperplane, namely:
	\begin{equation}\label{e2}
		g=-\frac{1}{t}dt^2+g_{ij}(t,x^1, \dots, x^{n-1})dx^idx^j, \qquad \HH^*=\{t=0\}.
	\end{equation}  
	The integrability  of 
	$1/\sqrt{t}$ permits to obtain a similar 
	conclusion as above for an integral curve $\gamma$ of $-\partial_t$ as well as for the causal boundary\footnote{In spite of the fact that the cones  degenerate to a hyperplane at $\HH^*$, distinct points of $\HH^*$ yield distinct TIFs.		Indeed,  the TIF $F_0$ associated to each $(0,x_0)\in \HH^*$ is $F_0=\{(t,x): 0<t<\left(3|x-x_0|/2\right)^{2/3}\}$.}. In particular, each {\em $\gamma$ is a timelike pregeodesic for the metric in (\ref{e2})  which arrives in a finite proper time at $\HH^*$, and $\HH^*$ is included in the past causal boundary for $g^\ast$.
	}
	\item[$(5)$]
	It is worth stressing that transitions for dual $g^*$ are both mathematically and physically as natural as those for $g$.  Indeed, the  transitions in $g^*$ have the following natural meaning (see \cite{BeLoSa02} for a detailed development and \cite[Sect. 2.1]{BeJaSa20} for further issues): at each point $p$ of $\HH^*$, when one chooses a nonzero form $\omega_p \in T_pM^*$ in the radical of  $g^*_p$, then $g^*_p$ induces an Euclidean scalar product in the kernel of $\omega_p$. Then, up to this choice, $g^*_p$ provides a {\em Galilean} structure on $T_pM$. Analogously, a choice of a nonzero vector $v_p$ in the radical $g_p$ yields also an Euclidean metric in $ v_p^o\subset T_pM^*$ and, thus, a {\em dual Galilean} (or {\em anti-Galilean}) structure\footnote{Anti-Galilean structures are also called {\em Carroll} in the literature. When a Galilean (resp. anti-Galilean) structure is chosen at each point $p$ of a manifold $M$, a Leibnizian (resp. anti-Leibnizian) structure on $M$ is obtained. Leibnizian and anti-Leibnizian structures admit    so many connections that parallelize them as semi-Riemannian metrics do but, as a difference with the latter, a single connection  cannot be selected by imposing torsionless, see \cite{BerSan03}.} on $T_pM$.
	Summing up, {\em the framework of signature-changing metrics should be developed considering transitions of  $g$ and $g^*$ in the same footing,  eventually taking advantage of conclusions on lightcones and other physical issues}. 
		\end{enumerate}

\section{Conclusions}	
 	Smooth metric transitions of signature in both Lorentzian-Euclidean black
	holes and Lorentzian to Riemannian spacetimes, offer interesting
	possibilities to explore. In both of them, there are freely falling
	observers arriving at the transition hypersurface $\mathcal{H}$ in finite
	proper time,  This happens in spite of the numerical evidence found in \cite{CaDeBa24}, thus, the Lorentz-Euclidean black hole model should be revised. 
	
	There is a number of issues which can be addressed directly in the case of
	a Lorentz to Riemannian signature change:
	(1) The existence of a privileged family of freely falling observers
	arriving at finite proper time.
	(2) The  possible   identification of $\mathcal{H}$ with the spatial part of the causal boundary in the Lorentzian region
	$M_L$.
	(3) The extension of the notion of {\em global hyperbolicity} to the
	signature changing metric.
	(4) The fact that signature transitions for the dual metric $g^*$ are
	physically and mathematically as appealing as those for $g$.
	(5) The existence of natural interpretations for the transitions of $g$ and
	$g^*$ in terms of the degeneracy of the lightcones (to a line or a
	hyperplane), eventually with additional geometric structures (Galilean and
	dual Galilean) therein.

	 Thus, the transition hypersurface as in \cite{HarHaw83, KosKri93}, can be regarded not just as a mathematical artifact for boundary conditions but as a physical object with properties testable from the spacetime viewpoint. 
	
\appendix  
\section{The equation \eqref{dottr}   in Gullstrand-Painlev\'e coordinates}\label{appA}
We show that   \eqref{dottr} (and then  \eqref{propertime})  can  be also derived  in Gullstrand-Painlev\'e coordinates. Let $\tilde\varepsilon$  and $\gamma=\gamma (\tau)= \big(\mathscr T(\tau), r(\tau) \big)$  be as in Section \ref{2b},  but now $g_{\tilde \eps}$      as in \eqref{gps},  with $\tilde\varepsilon$ instead of  $\eps_{\rho,\kappa}$. 
In this coordinates, \eqref{E} becomes
\[  E:=-\eps g_{\tilde \eps}(\partial_{\mathscr T}, \dot{\gamma })= \eps \tilde\eps \Lambda \dot{\mathscr T}- \eps \sqrt{\tilde{\eps }(1-\Lambda)} \dot r \]
Taking into account \eqref{privileged} we then  get: 
\begin{align*}
	-\varepsilon &= -\tilde{\eps}\Lambda\dot{\mathscr T}^2 + \dot{r}^2 + 2\sqrt{\tilde{\varepsilon}(1-\Lambda)}\dot{\mathscr T} \dot{r}\\
	& = -\frac{(\eps \tilde \eps \Lambda \dot{\mathscr T})^2}{\eps^2  \tilde \eps  \Lambda} + \dot{r}^2 + 2 \frac{\sqrt{\tilde{\varepsilon}(1-\Lambda)}\dot{r} }{\eps \tilde \eps \Lambda } \eps \tilde \eps \Lambda \dot{\mathscr T} \\
	& = -\frac{(E + \eps \sqrt{\tilde{\eps }(1-\Lambda)}  \dot r)^2}{\eps^2  \tilde \eps  \Lambda} + \dot{r}^2 + 2 \frac{\sqrt{\tilde{\varepsilon}(1-\Lambda)}\dot{r} }{\eps \tilde \eps \Lambda } ( E + \eps \sqrt{\tilde{\eps }(1-\Lambda)}  \dot r) \\
	& = - \frac{E^2}{\eps^2 \tilde \eps \Lambda} + \frac{1-\Lambda}{\Lambda} \dot{r}^2 + \dot{r}^2 \\
	& = - \frac{E^2}{\eps^2 \tilde \eps \Lambda} + \frac{\dot{r}^2}{\Lambda},
\end{align*}
so formula \eqref{dottr}  is obtained.

\section{Christoffel symbols}
	The nonzero Christoffel symbols $\Gamma^r_{ij}$ for the metric $g_{\tilde\eps}$ given in equation  \eqref{gps} (with $\tilde\eps$ replacing $\eps_{\rho,\kappa}$) as calculated using SageMath \cite{SageDe24},   are
	\begin{align}
		&\Gamma^r_{\mathscr T\!\!\mathscr T}=    -\frac{2(2m^2-mr)\tilde\eps (r) -(4m^2r-4mr^2+r^3)\tilde\eps'  (r)}{2r^3}   
		=  \frac{m}{r^2}\beta  (r)+  \frac 12\beta^2  (r) \frac{\tilde\varepsilon'  (r)}{\tilde\varepsilon^2  (r)}  \label{rtt}\\
		&\Gamma^r_{rr}  = 	\frac{mr  \tilde\eps'  (r)- m\tilde\eps (r)}{r^2\tilde\eps (r)} 
		=   - \frac{m}{r^2} +  \frac{m}{r} \frac{\tilde\varepsilon'  (r)}{\tilde\varepsilon  (r)}  \label{rrr}\\
		&\Gamma^r_{\mathscr Tr}   =  - \sqrt{2 m} \;  \frac{  2 m\tilde\varepsilon(r) - (2 mr- r^2) \tilde\varepsilon' (r)  }{2r^2 \sqrt{r \tilde\varepsilon(r)}} =  - \frac{m}{r^2} \sqrt{\frac{2m}{r}}\sqrt{\tilde\varepsilon}- \frac{\beta}{2}\frac{\tilde\varepsilon'}{\tilde\varepsilon^2}\sqrt{\frac{2m}{r}}\sqrt{\tilde\varepsilon}  \label{rtr}\\
		&\Gamma^r_{\theta\theta}=2m-r \label{rthth}\\
		&\Gamma^r_{\varphi\varphi}=(2m-r)\sin^2\theta.\label{rphph}
	\end{align}

	\begin{acknowledgments}
		
		\noindent R.B. and E.C. are  partially supported by  PRIN 2022 PNRR {``P2022YFAJH Linear and Nonlinear PDE's: New directions and Application''},  by  MUR under the Programme ``Department of Excellence'' Legge 232/2016  (Grant No. CUP - D93C23000100001) and by  GNAMPA INdAM - Italian National Institute of High Mathematics.
		
		\noindent M.S. is partially supported by the
		project PID2020-116126GB-I00 funded by MCIN/ AEI /10.13039/501100011033  and by the
		framework of IMAG-Mar\'{\i}a de Maeztu grant CEX2020-001105-M funded 
		by MCIN/AEI/10.13039/50110001103.
		
		\noindent The data that support the findings of this article are openly available \cite{SageDe24}. 
	\end{acknowledgments}

	
%

	\end{document}